\newtheorem{theorem1}{\hspace{\parindent}\bf{Теорема}}
\newtheorem{lemma}{\hspace{\parindent}\bf{Лемма}}
\newtheorem{remark}{\hspace{\parindent}\bf{Замечание}}
\newtheorem{corol}{\hspace{\parindent}\bf{Следствие}}
\renewcommand{\@biblabel}[1]{#1.} 
\begin{document}

\title{Нахождение решения характеристического уравнения \\
для модели диффузии  Ланжевена с ортогональными возмущениям \\
Constructing a solution to the characteristic equation for the Langevin diffusion model with orthogonal perturbations}

\author {В.А. Дубко\footnote{Научно-учебный центр прикладной информатики Национальной Аквдемии наук Украины, Киев, Украина}, С.В. Зубарев\footnote{Научно-учебный центр прикладной информатики Национальной Аквдемии наук Украины, Киев, Украина}, Е.В. Карачанская\footnote{Дальневосточный государственный университет путей сообщения, Хабаровск, Россия}}



\date{}

\maketitle
\begin{abstract}
Для модели Ланжевена динамики броуновской частицы с ортогональными к её текущей скорости возмущениями, в режиме, когда модуль скорости частицы становится постоянной, построено уравнение для характеристической функции $\psi (t,\lambda )=M\left[\exp (\lambda ,x(t))/V={\rm v}(0)\right]$ положения $x(t)$ броуновской частицы.
Полученные результаты подтверждают вывод о том, что модель динамики броуновской частицы,  построенная на основе нетрадиционной физической трактовки уравнений Ланжевена -- стохастических уравнений с ортогональными воздействиями, приводит к трактовке ансамбля броуновских частиц как системы, обладающей волновыми свойствами.
Эти результаты согласуются с ранее полученным выводам о том, при определённом согласовании коэффициентов в исходном стохастическом уравнении, для малых случайных влияниях и трении, уравнения Ланжевена приводят к описанию плотности вероятности положения частицы на основе волновых уравнений. При больших случайных воздействиях и трении плотность вероятности является решением диффузионного уравнения, с коэффициентом диффузии, меньшим по сравнению с моделью классической диффузии.

For the Langevin model of the dynamics of a Brownian particle with perturbations orthogonal to its current velocity, in a regime when the particle velocity modulus becomes constant, an equation for the characteristic function $\psi (t,\lambda )=M\left[\exp (\lambda ,x(t))/V={\rm v}(0)\right]$ of the position $
 x (t) $ of the Brownian particle.
The obtained results confirm the conclusion that the model of the dynamics of a Brownian particle, which constructed on the basis of an unconventional physical interpretation of the Langevin equations, i. e. stochastic equations with orthogonal influences, leads to the interpretation of an ensemble of Brownian particles as a system with wave properties.
These results are consistent with the previously obtained conclusions that, with a certain agreement of the coefficients in the original stochastic equation, for small random influences and friction, the Langevin equations lead to a description of the probability density of the position of a particle based on wave equations. For large random influences and friction, the probability density is a solution to the diffusion equation, with a diffusion coefficient that is lower than in the classical diffusion model.
\end{abstract}

\section*{Введение}
В работе \cite{18}  рассмотрено уравнение Ланжевена с ортогональными возмущениями к скорости броуновской частицы:
\begin{subequations}\label{GrindEQ__2_}
\begin{align}
 & \displaystyle d{\rm v}(t)=-a(t){\rm v}(t)dt+\frac{b(t)}{\left|{\rm v}(t)\right|} \left[{\rm v}(t)\times d{\rm w}(t)\right],
\label{GrindEQ__2a_} \\
&   \displaystyle dx(t)={\rm v}(t)dt, \ \ \  \ \ {\rm  v}(t), \ \ x(t)\in R^{3} ,
\label{GrindEQ__2b_}.
\end{align}
\end{subequations}
В  предположении  о независимости $x(0)$ и ${\rm v}(0)$, исследованы решения  уравнения для плотности распределения
\[\rho (t,x/{\rm v}(0))=\int _{-\infty }^{\infty } \rho (t,x/y;{\rm v}(0))\rho (y)dy_{1} dy_{2} dy_{3} . \]

При  постоянных значениях коэффициентов $a$ и $b$ это уравнение имеет вид:
\begin{equation} \label{GrindEQ__4_}
\begin{array}{c}
\displaystyle{\frac{\partial ^{2} \rho (t,x/{\rm v}(0))}{\partial t^{2} } +a\frac{\partial \rho (t,x/{\rm v}(0))}{\partial t} =} \\
+\displaystyle{3^{-1} |{\rm v}(0)|^{2} [1-\exp \{ -3at\} ]\nabla _{x}^{2} \rho (t,x/{\rm v}(0))} 
\displaystyle{+\sum _{l=1}^{3} \sum _{j=1}^{3} {\rm v}_{l} {\rm (}0)v_{j} (0)\frac{\partial ^{2} \rho (t,x/{\rm v}(0))}{\partial x_{l} \partial x_{j} } } \end{array}
\end{equation}

Были исследованы свойства решений уравнения \eqref{GrindEQ__4_} при различных соотношениях между коэффициентами:
\[a=\tilde{a}\varepsilon ^{-1} ,\; \; b=\tilde{b}\varepsilon ^{-1} ,\; \; \varepsilon \to 0, \]
\[a=\varepsilon a_{0} ,\; \; \displaystyle\frac{b^{2} }{\varepsilon a_{0}} =|{\rm v}(0))|^{2} =const, \ \  \varepsilon \to 0, \]
 где $\tilde{a},  \, \, a_{0} , \, \, \tilde{b}$ -- ограниченные величины.

В первом случае (диффузионное приближение) наблюдается замедление процесса диффузии по направлению $\displaystyle\frac{{\rm v}(0)}{{\rm |v}(0)|} $. Во втором (случай слабых взаимодействий со средой) -- плотность вероятности положения частицы аппроксимируется решением волнового уравнения.

Именно эти особенности асимптотических свойств решений \eqref{GrindEQ__4_}  стали стимулом к исследованию свойств ансамбля броуновских частиц, подчинённых уравнениям \eqref{GrindEQ__4_}, для произвольных времен при значениях коэффициентов, отличных от асимптотических, рассмотренных в \cite{18}.

Отметим, что данная модель не исчерпывает множества вариантов   интерпретации и применения моделей с ортогональными случайными воздействиями \cite{19,20}.

\section{Нахождение решения уравнения для характеристической функции процесса $x(t)$}

Будем искать решение уравнение для характеристической функции уравнения \eqref{GrindEQ__4_} когда $a$ и $b$  -- постоянные величины, и $\displaystyle\frac{b^{2}}{a} =|{\rm v}(0))|^{2} =const$. Эти условия соответствуют тому, что начальное значение  ${\rm v}(0)=V$ находится на поверхности устойчивости для процесса ${\rm v}(t;{\rm  v(0)})$.

В этом случае уравнение для характеристической функции \eqref{GrindEQ__4_}
при условии, что $x(0),{\rm v}(0)$  --  независимые, имеет вид:
\begin{subequations}\label{GrindEQ__5_}
\begin{align}
 & \displaystyle\frac{\partial ^{2} \psi (t)}{\partial t^{2} } +a\frac{\partial \psi (t)}{\partial t} =-|\lambda |^{2} \psi (t)\left[\frac{\left|V\right|^{2} }{3} (1-e^{-3at} )\right]-(\lambda ,{\rm v}(0))^{2} \psi (t)
\label{GrindEQ__5a_} \\
& \displaystyle\psi (0)=M[\exp \{ {\rm i}\left(\lambda ,x{\rm (}0)\right)\} ],\; \, \, \, \, \; \frac{\partial   \psi (t)}{\partial t  } \left|_{t=0} \right. ={\rm i}(\lambda ,{\rm v}(0))\psi (0),\; \, \, \, \; {\rm i}^{2} =-1,
\label{GrindEQ__5b_}
\end{align}
\end{subequations}

Перепишем, первоначально, уравнение \eqref{GrindEQ__5a_} в таком виде:
\begin{equation} \label{GrindEQ__6_}
\frac{d^{2} \psi (t)}{dt^{2} } +a\frac{d\psi (t)}{dt} +(-\alpha ^{2} \cdot e^{-3at} +\beta ^{2} )\cdot \psi (t)=0,
\end{equation}
 где
\begin{equation} \label{GrindEQ__7_}
\; \alpha =\frac{\left|\lambda \right|\left|V\right|}{\sqrt{3} } , \ \ \beta ^{2} =\alpha ^{2} +(\lambda ,V)^{2}
\end{equation}

 Выполним  замены: $\psi (t)=u(\tau )$ и $\tau =-3at$. Теперь уравнение \eqref{GrindEQ__6_} принимает вид:
\begin{equation} \label{GrindEQ__8_}
\displaystyle\frac{d^{2} u(\tau )}{d\tau ^{2} } -\frac{1}{3} \frac{du(\tau )}{d\tau } +\left[-\left(\frac{\alpha }{3a} \right)^{2} \cdot e^{\tau } +\left(\frac{\beta }{3a} \right)^{2} \right]\cdot u(\tau )=0.
\end{equation}

После выполнения замены $\displaystyle z(\tau )=\left(\frac{2\alpha }{3a} \right)e^{\frac{\tau }{2} } \ge 0$, $u(\tau )=e^{\frac{\tau }{6} } W(z(\tau ))$, приходим к уравнению:
\begin{equation} \label{GrindEQ__9_}
z^{2} \frac{d^{2} W(z)}{dz^{2} } +z\frac{dW(z)}{dz} -\left[z^{2} +\upsilon ^{2} \right]\cdot W(z)=0
\end{equation}
где
\begin{equation} \label{GrindEQ__10_}
\upsilon =\frac{1}{3a} \sqrt{a^{2} -4\beta ^{2} }.
\end{equation}

Последнее уравнение -- модифицированное уравнение Бесселя. Его решениями  являются модифицированными функциями Бесселя 1-го рода \cite[с.13]{21}:
\begin{equation}\label{GrindEQ__11_}
 W_{1} (z)=I_{+\upsilon } (z)  \ \ \ \ {\mbox{\rm и}} \ \ \ \ W_{2} (z)=I_{-\upsilon } (z),
\end{equation}
где
\begin{equation} \label{GrindEQ__12_}
I_{\pm \upsilon } (z)=\sum _{m=0}^{\infty } \frac{\left(z2^{-1} \right)^{2m\pm \upsilon } }{m!\Gamma \left(m\pm \upsilon +1\right)}
\end{equation}

 Необходимо рассмотреть два варианта:

\textbf{Вариант 1}.    $a^{2} -4\beta ^{2} \ge 0$, т. е. $\upsilon $ -- действительное число.

\textbf{Вариант 2.}     $a^{2} -4\beta ^{2} <0$, т. е. $a<2\beta $. Тогда $\upsilon =\pm {\rm i}\gamma $, где
$\displaystyle\gamma =\frac{1}{3a} \sqrt{4\beta ^{2} -a^{2} } $  -- действительное число, которое может быть, как целым так и не целым.

\begin{lemma}\label{L1}
При выполнении условия $a^{2} -4\beta ^{2} \ge 0$ параметр $\lambda$ удовлетворяет ограничению:
\begin{equation} \label{GrindEQ__13_}
|\lambda |\in \left[0,a\frac{2\sqrt{3} }{\left|V\right|} \right].
\end{equation}
\end{lemma}

\begin{proof}
С учётом обозначений \eqref{GrindEQ__7_} имеем:
$$
\alpha ^{2} =\displaystyle\frac{\left|\lambda \right|^{2} \left|V\right|^{2} }{3}, \ \ \ \ \beta ^{2} =\displaystyle\frac{\left|\lambda \right|^{2} \left|V\right|^{2} }{3} +(\lambda ,V)^{2} .
$$
 Следовательно,
 $\displaystyle \upsilon =\frac{1}{3a} \sqrt{a^{2} -4\beta ^{2} } <\frac{1}{3} $,
 т.к.
 $$\displaystyle\mathop{\inf }\limits_{\lambda } \beta ^{2} (\lambda )=\left. \left(\frac{\left|\lambda \right|^{2} \left|V\right|^{2} }{3} +(\lambda ,V)^{2} \right)\right|_{\lambda =0} =0.$$

Условие положительности $a^{2} -4\beta ^{2} $ приводит  к требованию
\[1-\frac{\left|\lambda \right|^{2} \left|V\right|^{2} }{4a^{2} } \left(\frac{1}{3} +\frac{(\lambda ,V)^{2} }{\left|\lambda \right|^{2} \left|V\right|^{2} } \right)\ge 0\]
 или
\[1-\frac{1}{4a^{2} } \left(\left|\lambda \right|^{2} \left|V\right|^{2} 3^{-1} +(\lambda ,V)^{2} \right)\ge 0.\]

Таким образом,  точка $|\lambda |=0$ обеспечивает последнее  неравенство.
С другой стороны, это неравенство можно представить в таком виде:
\[\left|\lambda \right|^{2} \le \frac{4a^{2} }{\left|V\right|^{2} \left(\frac{1}{3} +\frac{(\lambda ,V)^{2} }{\left|\lambda \right|^{2} \left|V\right|^{2} } \right)}. \]

Далее,
\[\left|\lambda \right|^{2} \le \frac{4a^{2} }{\left|V\right|^{2} \left(\frac{1}{3} +\frac{(\lambda ,V)^{2} }{\left|\lambda \right|^{2} \left|V\right|^{2} } \right)} \le \mathop{\sup }\limits_{\lambda } \frac{4a^{2} }{\left|V\right|^{2} \left(\frac{1}{3} +\frac{(\lambda ,V)^{2} }{\left|\lambda \right|^{2} \left|V\right|^{2} } \right)} =\frac{12a^{2} }{\left|V\right|^{2} }   , \]
т.к.  знаменатель достигает минимального значения при $(\lambda ,V)=0$.

Эти неравенства и доказывают справедливость \eqref{GrindEQ__13_}.
\end{proof}

\begin{corol}\label{S1}
 Неравенство $a^{2} -4\beta ^{2} <0$ выполняется для любых $|\lambda |\notin \left[0,a\displaystyle\frac{2\sqrt{3} }{\left|V\right|} \right]$.
\end{corol}

Для нецелых значений параметра $\upsilon $  функции  $I_{+\upsilon } (z)$ и  $I_{-\upsilon } (z)$  линейно независимы. Следовательно, в области \eqref{GrindEQ__13_} общее решение уравнения \eqref{GrindEQ__5a_}, с учётом представления \eqref{GrindEQ__12_}, при возвращении к переменной $t$  имеет вид:
\begin{equation} \label{GrindEQ__14_}
\psi (t)=e^{-at/2} [C_{1} I_{+\upsilon } (2p\eta (t))+C_{2} I_{-\upsilon } (2p\eta (t)],
\end{equation}
где $C_{1} ,C_{2} $ -- постоянные, и, для удобства восприятия, введены обозначения:
\begin{equation} \label{GrindEQ__15_}
\displaystyle\eta (t)=e^{-\frac{3at}{2} } ,\; \; p=\frac{\alpha }{3a} \ge 0.
\end{equation}

\begin{remark}\label{Zm1}
Учитывая обозначения \eqref{GrindEQ__7_}, получаем: $\displaystyle p=\frac{\alpha }{3a}   =\frac{\left|\lambda \right|\left|V\right|}{3a\sqrt{3} } $ и, следовательно, для любого $\lambda $ в области изменения \eqref{GrindEQ__13_} следует, что $\displaystyle  p\in [0,{\raise0.5ex\hbox{$\scriptstyle 2 $}\kern-0.1em/\kern-0.15em\lower0.25ex\hbox{$\scriptstyle 3 $}}]$.
\end{remark}

\begin{lemma}\label{L2}
Общее решение \eqref{GrindEQ__6_} для нецелых $\upsilon $ допускает представление:
\begin{equation} \label{GrindEQ__16_}
\psi (t)=C_{1} F_{+\upsilon } (t)e^{-\, \, \frac{[1+3\upsilon ]at}{2} } +C_{2} F_{-\upsilon } (t)e^{-\, \, \frac{[1-3\upsilon ]at}{2} } ,
\end{equation}
где
\begin{equation} \label{GrindEQ__17_}
F_{\pm \upsilon } (t)=\sum _{m=0}^{\infty } \frac{\left[p\cdot \eta (t)\right]^{2m} }{\Gamma (m\pm \upsilon +1)m!}
\end{equation}
\end{lemma}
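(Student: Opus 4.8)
The plan is to obtain \eqref{GrindEQ__16_} directly from the Bessel-function representation \eqref{GrindEQ__14_} by unfolding the defining power series \eqref{GrindEQ__12_}; no new differential equation needs to be solved, since all the analytic work was already done in passing from \eqref{GrindEQ__6_} to the modified Bessel equation \eqref{GrindEQ__9_}. First I would insert the argument $z = 2p\eta(t)$ into \eqref{GrindEQ__12_}, noting that $z/2 = p\eta(t)$, so that
\[
I_{\pm\upsilon}(2p\eta(t)) = \sum_{m=0}^{\infty} \frac{\left(p\eta(t)\right)^{2m\pm\upsilon}}{m!\,\Gamma(m\pm\upsilon+1)}.
\]
The key algebraic step is to pull the factor $\left(p\eta(t)\right)^{\pm\upsilon}$, common to every term, out of the summation; what remains is by definition the series $F_{\pm\upsilon}(t)$ of \eqref{GrindEQ__17_}. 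This yields the clean identity $I_{\pm\upsilon}(2p\eta(t)) = \left(p\eta(t)\right)^{\pm\upsilon} F_{\pm\upsilon}(t)$.

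Next I would use the explicit form $\eta(t) = e^{-3at/2}$ from \eqref{GrindEQ__15_} to convert the extracted power into an exponential, $\eta(t)^{\pm\upsilon} = e^{\mp 3a\upsilon t/2}$, and then multiply by the overall prefactor $e^{-at/2}$ carried in \eqref{GrindEQ__14_}. Collecting the exponents term by term gives $e^{-at/2}e^{-3a\upsilon t/2} = e^{-(1+3\upsilon)at/2}$ for the $+\upsilon$ contribution and $e^{-at/2}e^{+3a\upsilon t/2} = e^{-(1-3\upsilon)at/2}$ for the $-\upsilon$ contribution, which are precisely the exponential weights of \eqref{GrindEQ__16_}. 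The leftover numerical factors $p^{\pm\upsilon}$ are constants, so they may be absorbed into the arbitrary constants $C_1, C_2$ (relabelling $C_1 p^{\upsilon}\mapsto C_1$ and $C_2 p^{-\upsilon}\mapsto C_2$), after which \eqref{GrindEQ__14_} becomes verbatim the claimed representation \eqref{GrindEQ__16_}.

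The one point requiring genuine care, rather than bookkeeping, is the justification that \eqref{GrindEQ__16_} is the \emph{general} solution, i.e. that $F_{+\upsilon}$ and $F_{-\upsilon}$ (equivalently $I_{+\upsilon}$ and $I_{-\upsilon}$) are linearly independent. Here I would invoke the case split already set up in the excerpt: in Case~2 the order $\upsilon=\pm\mathrm{i}\gamma$ is non-real, so the two solutions are manifestly independent; in Case~1, Lemma~\ref{L1} forces $0\le\upsilon<1/3$, so $\upsilon$ is never a positive integer and the classical degeneracy $I_{-n}=I_{n}$ does not occur (the only borderline value being $\upsilon=0$). Thus two independent solutions are available, and since \eqref{GrindEQ__6_} is a second-order linear ODE their linear span exhausts the solution space. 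The remaining risk is purely a sign error in matching the exponents $-(1\pm 3\upsilon)at/2$, which the explicit exponent computation above is designed to control.
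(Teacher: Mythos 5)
Your proof is correct and follows essentially the same route as the paper: both unfold the series \eqref{GrindEQ__12_} at $z=2p\eta(t)$, factor out $\left(p\eta(t)\right)^{\pm\upsilon}=p^{\pm\upsilon}e^{\mp 3a\upsilon t/2}$, absorb the constant $p^{\pm\upsilon}$ into $C_{1},C_{2}$, and combine with the prefactor $e^{-at/2}$ from \eqref{GrindEQ__14_}. Your version is in fact the cleaner one --- the paper's intermediate identity \eqref{GrindEQ__18_} carries a spurious factor $({\rm i})^{2m\pm\upsilon}$ and a minus sign that do not match the definition \eqref{GrindEQ__17_} --- and your explicit linear-independence check (using Lemma~\ref{L1} to rule out integer $\upsilon$, with only $\upsilon=0$ as a degenerate borderline) supplies a justification the paper leaves implicit.
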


\begin{proof}
Рассмотрим явное представление $I_{\pm \upsilon } (p\eta (t))$:
\begin{equation} \label{GrindEQ__18_}
I_{\pm \upsilon } (2p\eta (t))=\sum _{m=0}^{\infty } \frac{\left(pe^{-\frac{3at}{2} } \right)^{2m\pm \upsilon } (i)^{2m\pm \upsilon } }{m!\Gamma \left(m\pm \upsilon +1\right)} =-F_{\pm \upsilon } (pe^{-\, \, \frac{3at}{2} } )({\rm i}p)^{\pm \upsilon } e^{\mp \frac{3\upsilon at}{2} } .
\end{equation}

Для того, что бы включить значение $p=0$, что соответствует, по построению, $\displaystyle|\lambda |=a\frac{2\sqrt{3} }{\left|V\right|} $, будем учитывать тот факт,  что если $W(z)$ -- частное решение \eqref{GrindEQ__9_}, то и $\mu W(z)$, где $\mu $ -- произвольная постоянная, также является частным решением уравнения \eqref{GrindEQ__9_}.

Это позволяет, с учётом представления \eqref{GrindEQ__12_} и постоянства значения выражения $({\rm i}p)^{\pm \upsilon }$ при любых значениях параметра $\upsilon $, искать частные решения \eqref{GrindEQ__6_} в таком виде:
\begin{equation} \label{GrindEQ__19_}
\displaystyle\psi _{1} (t)=F_{+\upsilon } (t)e^{-\, \, \frac{[1+3\upsilon ]at}{2} } ,\ \ \ \psi _{2} (t)=F_{-\upsilon } (t)e^{-\, \, \frac{[1-3\upsilon ]at}{2} } .
\end{equation}
Соответственно, общее решение преобразуется в \eqref{GrindEQ__16_}, где постоянные \textit{С}${}_{1}$, \textit{С}${}_{2}$, определяются с использованием начальных условия \eqref{GrindEQ__5b_}.
\end{proof}

\begin{remark}\label{Zm2}
С учётом \eqref{GrindEQ__17_} и теоремы Лагранжа, функции $F_{\pm \upsilon } (t)$ являются всюду убывающим функциями по $t$ для всех значений $\upsilon \in [0,3^{-1} ]$. В силу Замечания \ref{Zm1} они ограничена в области \eqref{GrindEQ__13_} значений $\lambda $.
\end{remark}

Рассмотрим свойства функций $\psi _{\upsilon } (t)$ в представлении \eqref{GrindEQ__16_}.

\begin{lemma}\label{L3}
Для любых $\upsilon \in [0,3^{-1} ]$ функции  $\psi _{\upsilon } (t)$ монотонно сходятся к 0 при $t\to \infty $.
\end{lemma}

\begin{proof}
Достаточно установить справедливость этого утверждения отдельно для представлений  $\psi _{1} (t)$ и $\psi _{2} (t)$ \eqref{GrindEQ__19_}.  Докажем это для функции $\psi _{2} (t)$:
\[\displaystyle\psi _{2} (t)=e^{-\, \frac{t}{2} \left[a-\sqrt{(a^{2} -4\beta ^{2} )} \right]} \cdot F_{-\upsilon } (t).\]

Так как выражение $a-\sqrt{(a^{2} -4\beta ^{2} )} \ge 0$ при $a\ge 0$, a $F_{-\upsilon } (t)$ ограничено, в силу Замечания \ref{Zm2},   то $\mathop{\lim }\limits_{t\to \infty } \psi _{2} (t)=0$.
\end{proof}

  Аналогично устанавливается, с учётом Замечания \ref{Zm2}, что и $\mathop{\lim }\limits_{t\to \infty } \psi _{1} (t)=0$.

Обобщим полученные выводы в форме утверждения.

\begin{theorem1}\label{T2}
Общее решение $\psi (t)$ уравнения \eqref{GrindEQ__5_}  при условии
\[|\lambda |\in \left[0,a\frac{2\sqrt{3} }{\left|V\right|} \right],\]
имеет вид:
\begin{equation} \label{GrindEQ__20_}
\psi (t)=C_{1} F_{\upsilon } (t)e^{-\, \, \frac{[1+3\upsilon ]at}{2} } +C_{2} F_{-\upsilon } (t)e^{-\, \, \frac{[1-3\upsilon ]at}{2} } ,
\end{equation}
где \textit{С}${}_{1}$ и \textit{С}${}_{2}$ -- решения системы уравнений:
\begin{equation} \label{GrindEQ__21_}
\left\{\begin{array}{l}
{C_{1} F_{\upsilon } (0)+C_{2} F_{-\upsilon } (0){\rm \; }={\rm M}[\exp \left\{{\rm i}\left(\lambda ,x{\rm (}0)\right)\right\}]} \\
\\
\displaystyle {C_{1} \frac{(\partial e^{-\frac{[1-3\upsilon ]at}{2} } F_{\upsilon } (t))}{\partial t} \left|_{t=0} \right. +C_{2} \frac{\partial (e^{-\frac{[3\upsilon +1]at}{2} } F_{-\upsilon } (t))}{\partial t} \left|_{t=0} \right. =-{\rm i}(\lambda ,{\rm v}(0)){\rm M}[\exp \left\{{\rm i}\left(\lambda ,x{\rm (}0)\right)\right\}]}
\end{array}\right.
\end{equation}
\end{theorem1}

\begin{proof}
есть следствие Лемм \ref{L1}, \ref{L2}, \ref{L3}, Замечаний и подстановки общего решения \eqref{GrindEQ__20_} в начальные условия \eqref{GrindEQ__5b_}. Воспользовавшись явным видом \eqref{GrindEQ__17_} для $F_{\pm \upsilon } (t)$, их линейной независимостью, убеждаемся, что для любых $\upsilon \in [0,3^{-1} ]$ значения
\[\displaystyle F_{\upsilon } (0),\, \, \, \, \, \, F_{-\upsilon } (0),\ \ \ \ \frac{\partial (e^{-\frac{[1-3\upsilon ]at}{2} } F_{\upsilon } (t))}{\partial t} \left|_{t=0} \right. ,\, \, \, \, \, \, \, \frac{\partial e^{-\frac{[3\upsilon +1]at}{2} } F_{-\upsilon } (t)}{\partial t} \left|_{t=0} \right. \]
существуют, ограничены и не обращаются одновременно в нуль. Это указывает на существование  решения \eqref{GrindEQ__21_}.
\end{proof}

Перейдем к рассмотрению \textbf{Варианта 2}, т. е., когда
\[\displaystyle\upsilon =\pm {\rm i}\gamma , \ \   {\rm i}^{2} =-1, \ \  \gamma =\frac{1}{3a} \cdot \sqrt{4\beta ^{2} -a^{2} } .    \]

Запишем модифицированные функции Бесселя 1-го рода с чисто мнимым индексом в исходной форме  \cite[c.13]{17}. В нашем случае, с учётом обозначений \eqref{GrindEQ__15_} и представления \eqref{GrindEQ__17_}, они имеют вид:
\begin{equation} \label{GrindEQ__22_}
I_{\pm {\rm i}\gamma } \left[2p\cdot \eta (t)\right]=\left[p\cdot \eta (t)\right]^{\pm {\rm i}\gamma } \cdot F_{\pm {\rm i}\gamma } (t),
\end{equation}

\begin{equation} \label{GrindEQ__23_}
F_{\pm {\rm i}\gamma } (t)=\sum _{m=0}^{\infty } \frac{\left[p\cdot \eta (t)\right]^{2m} }{m!\Gamma (m+1\pm {\rm i}\gamma )}
\end{equation}

 Следовательно, комплексно-значные частные решения уравнения \eqref{GrindEQ__5a_}, имеют вид:
\begin{equation} \label{GrindEQ__24_}
\psi _{1,2} (t,{\rm i})=e^{-\frac{at}{2} } \cdot I_{\pm {\rm i}
\gamma } \left[p\cdot \eta (t)\right].
\end{equation}

\begin{lemma}\label{L4}
\begin{equation} \label{GrindEQ__25_}
F_{\pm {\rm i}\gamma } (t)=\Phi _{1} (t,\gamma )\mp {\rm i}\cdot \Phi _{2} (t,\gamma ),
\end{equation}
 где
\begin{equation} \label{GrindEQ__26_}
\displaystyle\Phi _{1} \left(t,\gamma \right)=\sum _{m=0}^{\infty } \frac{\left[p\cdot \eta (t)\right]^{2m} \cdot Q_{1} \left(\gamma ,m\right)}{m!\cdot \left[Q_{1}^{2} \left(\gamma ,m\right)+Q_{{\rm 2}}^{{\rm 2}} \left(\gamma ,m\right)\right]} ,
\end{equation}

\begin{equation} \label{GrindEQ__27_}
\displaystyle\Phi _{2} (t,\gamma )=\sum _{m=0}^{\infty } \frac{\left[p\cdot \eta {\rm (}t)\right]^{2m} \cdot Q_{2} \left(\gamma ,m\right)}{m!\cdot \left[Q_{1}^{2} \left(\gamma ,m\right)+Q_{2}^{2} \left(\gamma ,m\right)\right]},
 \end{equation}

\begin{equation} \label{GrindEQ__28_}
\displaystyle Q_{1} (\gamma ,m)=\int _{0}^{\infty } \cos \left(\gamma \ln \tau \right)\cdot \tau ^{m} \cdot e^{-\tau } d\tau ,                                        \end{equation}

\begin{equation} \label{GrindEQ__29_}
\displaystyle Q_{2} (\gamma ,m)=\int _{0}^{\infty } \sin \left(\gamma \ln \tau \right)\cdot \tau ^{m} \cdot e^{-\tau } d\tau .
\end{equation}

\end{lemma}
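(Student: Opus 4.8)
The plan is to notice that, by \eqref{GrindEQ__23_}, the function $F_{\pm {\rm i}\gamma}(t)$ is a power series in the real variable $[p\cdot\eta(t)]^{2}$ whose $m$-th coefficient is $\dfrac{1}{m!\,\Gamma(m+1\pm {\rm i}\gamma)}$. Since the weights $[p\cdot\eta(t)]^{2m}/m!$ are real and nonnegative, the whole lemma collapses to one algebraic task performed coefficient by coefficient: to exhibit the real and imaginary parts of the complex number $\dfrac{1}{\Gamma(m+1\pm {\rm i}\gamma)}$. Having those parts, I would reassemble the series and read off $\Phi_1,\Phi_2$ exactly as in \eqref{GrindEQ__26_}--\eqref{GrindEQ__27_}.

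The key computation starts from the Euler integral representation $\Gamma(s)=\int_0^\infty \tau^{s-1}e^{-\tau}\,d\tau$ with $s=m+1\pm {\rm i}\gamma$, so that $\Gamma(m+1\pm {\rm i}\gamma)=\int_0^\infty \tau^{m}\,\tau^{\pm {\rm i}\gamma}\,e^{-\tau}\,d\tau$. Using $\tau^{\pm {\rm i}\gamma}=e^{\pm {\rm i}\gamma\ln\tau}=\cos(\gamma\ln\tau)\pm {\rm i}\sin(\gamma\ln\tau)$ and splitting the integral into its real and imaginary parts identifies the two quadratures of the statement:
\[
\Gamma(m+1\pm {\rm i}\gamma)=Q_1(\gamma,m)\pm {\rm i}\,Q_2(\gamma,m),
\]
with $Q_1,Q_2$ precisely as in \eqref{GrindEQ__28_}--\eqref{GrindEQ__29_}. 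Inverting this by multiplying numerator and denominator by the complex conjugate gives
\[
\frac{1}{\Gamma(m+1\pm {\rm i}\gamma)}=\frac{Q_1(\gamma,m)\mp {\rm i}\,Q_2(\gamma,m)}{Q_1^2(\gamma,m)+Q_2^2(\gamma,m)},
\]
and substituting back into \eqref{GrindEQ__23_} while collecting the $Q_1$-terms and the $Q_2$-terms separately yields $F_{\pm {\rm i}\gamma}(t)=\Phi_1(t,\gamma)\mp {\rm i}\,\Phi_2(t,\gamma)$, which is \eqref{GrindEQ__25_}.

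The remaining points concern convergence and well-posedness, and these are where I expect the only real care to be needed. The integrals $Q_1,Q_2$ converge for every $m\ge 0$: near $\tau=0$ one has $|\tau^{\pm {\rm i}\gamma}|=1$, so the integrand is of order $\tau^{m}$ and integrable, while $e^{-\tau}$ controls the tail. The denominator never vanishes, since $Q_1^2+Q_2^2=|\Gamma(m+1+{\rm i}\gamma)|^2$ and the Gamma function has no zeros, so the reciprocal is defined for all $m$. Finally, because $p\cdot\eta(t)\le p\le 2/3$ by Remark~\ref{Zm1} and $|\Gamma(m+1\pm {\rm i}\gamma)|$ grows like $m!$ by Stirling's asymptotics, the series in \eqref{GrindEQ__23_} converges absolutely, which legitimizes the term-by-term separation of real and imaginary parts and hence the final identity.

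I anticipate the main obstacle to be purely presentational rather than conceptual: the computation itself is an elementary conjugation, so the burden lies in cleanly justifying the integral representation of $\Gamma$ at the argument $m+1\pm {\rm i}\gamma$ (whose real part $m+1>0$ is exactly what guarantees convergence) and in confirming that the resulting real and imaginary series $\Phi_1,\Phi_2$ are the claimed analytic functions of $t$ through the bounded factor $p\cdot\eta(t)$.
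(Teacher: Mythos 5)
Your proof is correct and follows essentially the same route as the paper: both split $\Gamma(m+1\pm{\rm i}\gamma)=Q_{1}(\gamma,m)\pm{\rm i}\,Q_{2}(\gamma,m)$ via the Euler integral, invert by multiplying with the complex conjugate, and separate the series \eqref{GrindEQ__23_} term by term into real and imaginary parts, and you are in fact more explicit than the paper about why $Q_{1}^{2}+Q_{2}^{2}=|\Gamma(m+1+{\rm i}\gamma)|^{2}$ never vanishes and why the rearrangement is legitimate. One small caveat: Remark~\ref{Zm1} gives $p\le 2/3$ only for $\lambda$ in the range \eqref{GrindEQ__13_}, whereas this lemma concerns Case~2 where $p$ may exceed $2/3$ --- but your own observation that $|\Gamma(m+1\pm{\rm i}\gamma)|$ grows factorially already yields absolute convergence for every $p$, so nothing is lost.
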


\begin{proof}
Выполним преобразование:
\begin{equation} \label{GrindEQ__30_}
\begin{array}{l}
\displaystyle{\left[p\cdot \eta (t)\right]^{\pm {\rm i}\gamma } =\left[p\cdot e^{-\frac{3at}{2} } \right]^{\pm {\rm i}\gamma } =p^{\pm {\rm i}\gamma } \cdot e^{\mp \frac{3a\gamma t}{2} } =e^{\ln \left(p\right)^{\pm {\rm i}\gamma } } \cdot e^{\mp {\rm i}\frac{3a\gamma t}{2} } =} \\
\displaystyle{=e^{\pm {\rm i}\left(\gamma \ln p-\frac{3a\gamma t}{2} \right)} =\cos \left(\gamma \ln p-\frac{3a\gamma }{2} t\right)\pm {\rm i}\cdot \sin \left(\gamma \ln p-\frac{3a\gamma }{2} t\right).}
\end{array}
\end{equation}
\begin{equation} \label{GrindEQ__31_}
\Gamma \left(m+1\pm {\rm i}\gamma \right)=\int _{0}^{\infty } \tau ^{{\rm m}\pm {\rm i}\gamma } \cdot e^{-\tau } d\tau =\; Q_{1} (\gamma ,m)\pm {\rm i}\cdot Q_{2} (\gamma ,m).
\end{equation}

Выделим в \eqref{GrindEQ__22_} и \eqref{GrindEQ__23_} действительные и мнимые части. В силу линейности уравнения \eqref{GrindEQ__5a_}, они также будут частными решениями \eqref{GrindEQ__5a_}.

Подставляя \eqref{GrindEQ__31_} в \eqref{GrindEQ__23_} и совершая стандартные операции с комплексными величинами, приходим к представлению \eqref{GrindEQ__25_}.

 Обращаем внимание на то, что равномерная сходимость \eqref{GrindEQ__26_} и \eqref{GrindEQ__27_} следует из равномерной сходимости $F_{\pm {\rm i}\gamma } (t)$ (см. \eqref{GrindEQ__23_}). Сходимость $F_{\pm {\rm i}\gamma } (t)$, в свою очередь, следует из равномерной сходимости интеграла Сонина -- Шлефли для любых комплексных индексов функции Бесселя  \cite[c.640--641]{22}.

Таким образом, решения уравнения \eqref{GrindEQ__5a_} с учётом \eqref{GrindEQ__30_}, \eqref{GrindEQ__24_} и \eqref{GrindEQ__25_} приобретут вид:
\begin{equation} \label{GrindEQ__32_}
\psi _{1,2} (t,{\rm i})=\left[\cos \left(\sigma -\mu t\right)\pm {\rm i}\sin \left(\sigma -\mu t\right)\right]\cdot \left[\Phi _{1} \left(t,\gamma \right)\mp {\rm i}\Phi _{2} \left(t,\gamma \right)\right]e^{-\, \, \frac{at}{2} } ,
\end{equation}
 где $\sigma =\gamma \ln p$, $\mu =3a\gamma $.

Раскрывая в \eqref{GrindEQ__32_} скобки, запишем $\psi _{1,2} (t,{\rm i})$ в таком виде:
\[\psi _{1} (t,{\rm i})=e^{-\frac{at}{2} } \left[\varphi _{1} \left(t\right)+{\rm i}\cdot \varphi _{2} \left(t\right)\right],\psi _{2} (t,{\rm i})=e^{-\frac{at}{2} } \cdot \left[\varphi _{1} (t)-{\rm i}\cdot \varphi _{2} (t)\right],\]
где
\begin{equation} \label{GrindEQ__33_}
\varphi _{1} (t)=\Phi _{1} (t,\gamma )\cdot \cos \left(\sigma -\mu t\right)+\Phi _{2} (t,\gamma )\cdot \sin \left(\sigma -\mu t\right),
\end{equation}

\begin{equation} \label{GrindEQ__34_}
\varphi _{2} (t)=\Phi _{1} (t,\gamma )\cdot \sin \left(\sigma -\mu t\right)-\Phi _{2} (t,\gamma )\cdot \cos \left(\sigma -\mu t\right).
\end{equation}
\end{proof}

\begin{lemma}\label{L5}
На основе  функций $\psi _{1} (t)=e^{-\frac{at}{2} } \cdot \varphi _{1} (t)$ и $\psi _{2} (t)=e^{-\frac{at}{2} } \cdot \varphi _{2} (t)$ возможно построить фундаментальную систему решений уравнения \eqref{GrindEQ__5a_}.
\end{lemma}

\begin{proof}
Для того, чтобы убедиться в этом, установим линейную независимость частных решений $\displaystyle \psi _{1} (t)=e^{-\frac{at}{2} } \cdot \varphi _{1} (t)$ и $\psi _{2} (t)=e^{-\frac{at}{2} } \cdot \varphi _{2} (t)$. Для этого необходимо установить, что соответствующий определитель Вронского не равен нулю. В нашем случае он имеет вид:
\[\displaystyle W\left[\psi _{1} (t),\psi _{2} (t)\right]=
\det \left(
\begin{array}{cc} {\psi _{1} } & {\psi _{2} } \\ {\frac{d\psi _{1} }{dt} } & {\frac{d\psi _{2} }{dt} }
\end{array}
\right)=
\det \left(
\begin{array}{cc} {e^{-\frac{at}{2} } \cdot \varphi _{1} } & {e^{-\frac{at}{2} } \cdot \varphi _{2} } \\ {\frac{d\left[e^{-\frac{at}{2} } \cdot \varphi _{1} \right]}{dt} } & {\frac{d\left[e^{-\frac{at}{2} } \cdot \varphi _{2} \right]}{dt} }
\end{array}
\right).\]

Дифференцируя элементы второй строки, приходим к следующему представлению $W(\psi _{1} ,\psi _{2} )$:
\[W(\psi _{1} ,\psi _{2} )=e^{-\frac{at}{2} } \cdot \det \left(\begin{array}{cc} {\varphi _{1} } & {\varphi _{2} } \\ {\frac{d\varphi _{1} }{dt} } & {\frac{d\varphi _{2} }{dt} } \end{array}\right)=e^{-\frac{at}{2} } \cdot W\left(\varphi _{1} ,\varphi _{2} \right).\]

Таким образом:
\begin{equation} \label{GrindEQ__35_}
W(\psi _{1} ,\psi _{2} )=e^{-\frac{at}{2} } \cdot W(\varphi _{1} ,\varphi _{2} ).
\end{equation}

 Подставим \eqref{GrindEQ__33_} и \eqref{GrindEQ__34_} в \eqref{GrindEQ__35_}. После преобразований приходим к следующему виду для $W(\varphi _{1} ,\varphi _{2} )$:
\begin{equation} \label{GrindEQ__36_}
W(\varphi _{1,} \varphi _{2} )=\left(\Phi _{2} \right)^{2} \cdot \frac{d\left[\frac{\Phi _{1} }{\Phi _{2} } \right]}{dt} -\mu \cdot \left[\left(\Phi _{1} \right)^{2} +\left(\Phi _{2} \right)^{2} \right].
\end{equation}

Покажем, что \eqref{GrindEQ__36_}, применительно для функций $\Phi _{1} $ и $\Phi _{2} $, не обращается в 0.

Допустим, что это не так. Тогда \eqref{GrindEQ__36_} будет дифференциальным уравнением 1-го порядка относительно $y=\left(\frac{\Phi _{1} }{\Phi _{2} } \right)$:
\[\frac{dy}{dt} =\mu \left[1+\left(y\right)^{2} \right].\]

Решая данное уравнение, получаем следующее соотношение для функций $\Phi _{1} $ и $\Phi _{2} $:
\begin{equation} \label{GrindEQ__37_}
\Phi _{1} (t)=\Phi _{2} (t)\cdot {\rm tg}\left(\mu t+C\right),
\end{equation}
где $C$ -- постоянная интегрирования.

Например, в точках $t$, удовлетворяющих равенству $\mu t+C=\pi k$,  $k=0,\pm 1,\pm 2...$, выражение \eqref{GrindEQ__26_} не обращается в 0 при всех конечных значениях постоянной $C$. Таким образом, пришли к противоречию, поскольку  \eqref{GrindEQ__26_} функции $\Phi _{1} $ не согласуется с  \eqref{GrindEQ__37_}.

Следовательно, при чисто мнимом индексе для функций Бесселя фундаментальная система решений уравнения \eqref{GrindEQ__5a_} имеет вид:
\[\psi (t)=e^{-\, \, \frac{at}{2} } \left[C_{1} \varphi _{1} (t)+C_{2} \varphi _{2} (t)\right],\]
где $C_{1} $, $C_{2} $ -- постоянные, определяемые через начальные условия, а $\varphi _{1} (t)$ и $\varphi _{2} (t)$ определяются равенствами \eqref{GrindEQ__33_} и \eqref{GrindEQ__34_}.
\end{proof}

\begin{theorem1}\label{T3}
Общее решение  $\psi (t)$ уравнения \eqref{GrindEQ__5_} при условии $|\lambda |\notin \left[0,a\displaystyle\frac{2\sqrt{3} }{\left|V\right|} \right]$,
существует и имеет вид:
\begin{equation} \label{GrindEQ__38_}
\psi (t)=e^{-\frac{at}{2} } \cdot \left[C_{1} \varphi _{1} (t)+C_{2} \varphi _{2} (t)\right],
\end{equation}
\[\varphi _{1} (t)=\Phi _{1} (t,\gamma )\cdot \cos \left(\sigma -\mu t\right)+\Phi _{2} (t,\gamma )\cdot \sin \left(\sigma -\mu t\right),                              \]
\[\varphi _{2} (t)=\Phi _{1} (t,\gamma )\cdot \sin \left(\sigma -\mu t\right)-\Phi _{2} (t,\gamma )\cdot \cos \left(\sigma -\mu t\right).\]
где ${С}_{1}$ и ${С}_{2}$ -- решения системы уравнений:
\begin{equation} \label{GrindEQ__39_}
\left\{\begin{array}{l}
{\left[C_{1} \varphi _{1} (0)+C_{2} \varphi _{2} (0)\right],{\rm \; }={\rm M}[\exp \left\{{\rm i}\left(\lambda ,x{\rm (}0)\right)\right\}/V]} \\
\displaystyle{C_{1} \frac{\partial \varphi _{1} (t)}{\partial t} \left|_{t=0} \right. +C_{2} \frac{\partial \varphi _{2} (t)}{\partial t} \left|_{t=0} \right. =-[{\rm i}(\lambda ,{\rm v}(0))-{\rm a2}^{{\rm -1}} {\rm ]M}[\exp \left\{{\rm i}\left(\lambda ,x{\rm (}0)\right)\right\}/V]}
\end{array}\right.
\end{equation}
где $\sigma =\gamma \ln p$, $\mu =3a\gamma $, $\gamma =\displaystyle\frac{1}{3a} \cdot \sqrt{4\beta ^{2} -a^{2} } $, $p=\displaystyle\frac{|\lambda ||V|}{3a\sqrt{3} } $, а  функции $\Phi _{1} (t,\gamma )$, $\Phi _{2} (t,\gamma )$ определяются выражениями \eqref{GrindEQ__26_}, \eqref{GrindEQ__27_}.
\end{theorem1}

 \begin{proof}
Представление \eqref{GrindEQ__38_} есть  следствие утверждений Лемм \ref{L4}, \ref{L5}, Замечания \ref{Zm2} и взаимосвязи между переменными для Варианта 2.  Условия \eqref{GrindEQ__39_} -- результат подстановки значений функций  $\varphi _{1,2} (t)$ и их производных в начальные условия \eqref{GrindEQ__5b_}.
 \end{proof}

\begin{remark}
Представление \eqref{GrindEQ__38_} будет справедливо как для целых так и не целых значений~$\gamma $.
\end{remark}

\section*{Выводы}
Таким образом, полученные результаты
(теоремы \ref{T2} и \ref{T3}) в спектре характеристической функции, являющейся решением \eqref{GrindEQ__5_}, присутствует область  колебательных составляющих для любых значений параметров $a$ и $b$, при условии $\displaystyle\frac{b^{2}}{a} =|{\rm v}(0)|^{2} =const$.

Следовательно, полученные результаты подтверждают вывод о том, что для эволюция ансамбля броуновских частиц, траектории которых являются решением  уравнений типа Ланжевена \cite{23}, но при нетрадиционной физической трактовке влияния случайных сил (стохастические уравнения с ортогональными воздействиями) связана с колебательными процессами. Эти результаты согласуются с выводами работы \cite{18}, в которой показано, что при определённом согласовании коэффициентов в исходном стохастическом уравнении, в случае малых случайных воздействиях и трении, эволюция плотности вероятности для ансамбля частиц есть решение волнового уравнения. При большом трении и больших случайных воздействиях плотность вероятности положения частиц для этой модели, как показано в  \cite{5,7,18}, является решением диффузионного уравнения с коэффициентом диффузии, меньшим по сравнению с получаемым в моделях классической диффузии. Таким образом, для данной модели динамики броуновской частицы, ансамбль  броуновских частиц  --  это поток, обладающий волновыми свойствами.


\begin{thebibliography}{99}

















\bibitem{18} Дубко В.А.Об одной модели диффузии с постоянной скоростью // Математические заметки СВФУ,25:1 (2019), с.31--44.

\bibitem{19} Карачанская Е. В., Петрова А. П. Неслучайные функции и решения стохастических дифференциальных уравнений типа Ланжевена // Математические заметки СВФУ.--2016.--Т. 23,  № 3.--С. 55--69.

\bibitem{20} Карачанская Е. В., Петрова А. П. Применение программного управления с вероятностью 1 для некоторых задач финансовой математики // Математические заметки СВФУ.--2018.--  Т. 25, № 1.--С. 25--38.

\bibitem{21} Бейтмен Г., Эрдейи А. Высшие трансцендентные функции. Том 2. Функции Бесселя, функции параболического цилиндра, ортогональные многочлены (2-е изд.). М.: Наука, 1974.,296 с.


\bibitem{17}. Скороход A.B. Стохастические уравнения системы многих частиц // Математические методы в биологии. Киев: Наук. думка, 1977, с. 38--53.


\bibitem{22} Лаврентьев  М.А.,  Шабат  Б.В.  Методы  теории  функций  комплексного переменного. М.: Наука, 1973, 736 с..

\bibitem{23} Ланжевен П.   Избранные   труды. М.: Изд-во АН СССР, I960; Наука, 1964. 567 с.

\bibitem{5}  Дубко В.А. Метод диффузионной аппроксимации в исследовании и построении моделей стохастических  динамических систем.   Владивосток: Дальнаука, 1994. 107 с.


\bibitem{7} Дубко В.А.  Вопросы теории и применения стохастических дифференциальных уравнений /  Владивосток: ДВО АН СССР, 1989. 185 с.

\end{thebibliography}
\end{document}